\documentclass[conference,letterpaper]{IEEEtran}

\addtolength{\topmargin}{9mm}

\usepackage[english]{babel}

\usepackage{cite}
\usepackage{amsmath,amssymb,amsfonts,amsthm}
\theoremstyle{definition}

\newtheorem{prop}{Proposition}

\usepackage{algorithmic}
\usepackage{graphicx}
\usepackage{textcomp}
\usepackage{xcolor}

\usepackage{float}
\usepackage{subcaption}

\usepackage{array}
\usepackage[caption=false,font=footnotesize]{subfig}
\usepackage{lipsum}

\usepackage[linesnumbered,boxed]{algorithm2e} 
\usepackage{mathrsfs}
\usepackage{color}%

\usepackage{scrextend}

\usepackage{soul}
\usepackage{multicol}

\newcommand{\compress}{\vspace{-2pt}}
\newcommand{\compressfloat}{\vspace{-2pt}}

\def\BibTeX{{\rm B\kern-.05em{\sc i\kern-.025em b}\kern-.08em
    T\kern-.1667em\lower.7ex\hbox{E}\kern-.125emX}}

    \RestyleAlgo{ruled}
\SetKwComment{Comment}{/* }{ */}


\begin{document}

\title{ Subgraph Matching via Partial Optimal Transport
\thanks{This work was supported by the Knut and Alice Wallenberg
foundation under grant KAW 2021.0274.}
}

\author{\IEEEauthorblockN{Wen-Xin Pan}
\IEEEauthorblockA{\textit{dept. name of organization (of Aff.)} \\
\textit{name of organization (of Aff.)}\\
City, Country \\
email address or ORCID}
\and
\IEEEauthorblockN{Isabel Haasler}
\IEEEauthorblockA{\textit{Signal Processing Laboratory 4} \\
\textit{EPFL}\\
Lausanne, Switzerland \\
isabel.haasler@epfl.ch}
\and
\IEEEauthorblockN{Pascal Frossard}
\IEEEauthorblockA{\textit{Signal Processing Laboratory 4} \\
\textit{EPFL}\\
Lausanne, Switzerland \\
pascal.frossard@epfl.ch}

\author{
Wen-Xin Pan, Isabel Haasler, Pascal Frossard \\
Signal Processing Laboratory (LTS4), EPFL\\
Emails:\{wenxin.pan, isabel.haasler, pascal.frossard\}@epfl.ch
}

}

\newpage

\maketitle

\begin{abstract}
In this work, we propose a novel approach for subgraph matching, the problem of finding a given query graph in a large source graph, based on the fused Gromov-Wasserstein distance.
We formulate the subgraph matching problem as a partial fused Gromov-Wasserstein problem, which allows us to build on existing theory and computational methods in order to solve this challenging problem.
We extend our method by employing a subgraph sliding approach, which makes it efficient even for large graphs.
In numerical experiments, we showcase that our new algorithms have the ability to outperform state-of-the-art methods for subgraph matching on synthetic as well as real-world datasets.
In particular, our methods exhibit robustness with respect to noise in the datasets and achieve very fast query times. \looseness=-1
\end{abstract}

\section{Introduction}

 Subgraph matching is a common problem in information sciences, where one aims to retrieve a predefined query graph within a large source graph.
This is also an important problem in various applications, such as
 anomaly detection \cite{akoglu2015graph,bakirtas_database_2023} and knowledge discovery \cite{anchuri2013approximate}.
Herein, the graphs may represent for instance knowledge graphs \cite{sun2022subgraph}, shapes \cite{sun_survey_2020}, biological data \cite{tian_saga_2007}, or social networks \cite{dasgupta_discovering_2020}.
Subgraph matching has traditionally been addressed by modifications of the quadratic assignment problem \cite{schellewald_probabilistic_2005,lawler_quadratic_1963}. 
This requires solving a combinatorial optimization problem, and is computationally prohibitive for large graphs.
To tackle these computational issues, various index-based methods have been developed \cite{tian_saga_2007,khan_nema_2013,liu_g-finder_2019},
but these typically rely on heuristics.
Both types of approaches for subgraph matching face a particular challenge in
the setting where no exact matching between the query graph with any subgraph in the source graph is possible.
For instance, this may be the case when the source graph stems from measurements that may be disturbed by noise, which is a prevalent issue in many biological applications \cite{tian_saga_2007}.
In this noisy setting, it is crucial to describe the deviation between the query graph and the matched subgraph appropriately in order to retrieve the most relevant subgraph within the source graph.

In this work we propose a novel approach for subgraph matching based on the fused Gromov-Wasserstein distance, which is a recently introduced distance for graphs \cite{titouan2019optimal, vayer2020fused}, with strong mathematical and computational foundations.
In particular, the fused Gromov-Wasserstein distance is based on the Wasserstein metric \cite{villani2021topics, peyre2019computational} and can be used to define a metric on the space of structured objects. Moreover, it can be seen as a relaxation and generalization of the quadratic assignment problem \cite{titouan_optimal_2019}.
Within this framework, we introduce two novel methods for subgraph matching.
First, Subgraph Optimal Transport (SOT) poses the subgraph matching problem as a partial fused Gromov-Wasserstein problem, where we find the subgraph within the source graph that is most similar to the query graph, as measured by the fused Gromov-Wasserstein distance.
We provide an efficient algorithm for solving this problem.
Second, we present a refinement of this method, called Sliding Subgraph Optimal Transport (SSOT), inspired by ideas from index-based methods, which is computationally efficient even for large graphs.
Numerical experiments on synthetic and real-world datasets show that our methods can outperform state-of-the-art methods for subgraph matching in terms of success rate and query time, and are in particular robust to feature noise.

The paper is structured as follows.
In Section~\ref{sec:background}, we introduce relevant background material on partial optimal transport and the fused Gromov-Wasserstein problem.
In Sections~\ref{sec:sot} and \ref{sec:ssot}, we develop our two novel subgraph matching methods, SOT and SSOT.
Finally, in Section~\ref{sec:exp}, we present numerical experiments.

\compress
\section{Background} \label{sec:background}

In this section, we review the mathematical background needed for our subgraph matching framework. In particular, we present the partial optimal transport problem and the fused Gromov-Wasserstein distance.

\compress
\subsection{Optimal transport and partial optimal transport} \label{subsec:ot}

Optimal transport is a classical problem in mathematics \cite{villani2021topics}, which is often used as a measure of distance between probability distributions \cite{peyre2019computational}.
In this work, we consider the discrete version of optimal transport.
In more details, let $ \boldsymbol{p} \in \mathbb{R}^n$ and $ \boldsymbol{q} \in \mathbb{R}^m$ be two probability vectors with support on the set of points $x^{(1)},\dots,x^{(n)} $ and $y^{(1)},\dots,y^{(m)}$, respectively, which lie in the same space.
One can define a cost matrix $\boldsymbol{M} \in \mathbb{R}^{n \times m}$, where the element $\boldsymbol{M}_{ij}$ measures the distance between $x^{(i)}$ and $y^{(j)}$.
The optimal transport problem aims to find an optimal transport plan, which is a non-negative matrix of the same dimensions, $\boldsymbol{T} \in \mathbb{R}_+^{n \times m}$, where $\boldsymbol{T}_{ij}$ denotes the amount of mass transported from $x^{(i)}$ to $y^{(j)}$.
The total cost associated with a given transport plan $\boldsymbol{T}$ is
\begin{equation*}
    \langle \boldsymbol{M}, \boldsymbol{T} \rangle := 
    \text{trace} \left( \boldsymbol{M}^\top \boldsymbol{T} \right) = \sum_{\substack{i=1,\dots,n \\ j=1,\dots,m}} \boldsymbol{M}_{ij} \boldsymbol{T}_{ij}.
\end{equation*}
A transport plan $\boldsymbol{T}$ defines a feasible transport between $\boldsymbol{p}$ and $\boldsymbol{q}$ if it lies in the set
\begin{equation*}
    \mathcal{T}(\boldsymbol{p},\boldsymbol{q}) := \left\{ 
	\boldsymbol{T}\in \mathbb{R}_{+}^{n \times m} \ | \ 
	\boldsymbol{T}  \boldsymbol{1}_m={\boldsymbol{p}},\  \boldsymbol{T} ^{\top} \boldsymbol{1}_n={\boldsymbol{q}}
	\right\},
\end{equation*}
where $\boldsymbol{1}_d \in \mathbb{R}^d$ denotes a vector of ones\footnote{To simplify notation we sometimes omit the subindex when the size of the vector is clear from the context.}.
The optimal transport problem is to find the most cost-efficient transport plan, that is, to solve the linear program
\begin{equation} \label{eq:OT}
    \min_{\boldsymbol{T} \in \mathcal{T}(\boldsymbol{p},\boldsymbol{q})} \langle \boldsymbol{M}, \boldsymbol{T} \rangle.
\end{equation}
Note that if the points $x^{(1)},\dots,x^{(n)} $ and $y^{(1)},\dots,y^{(m)} $ lie in a metric space, and $\boldsymbol{M}$ describes the metric distances between them, then the objective value of \eqref{eq:OT} defines a metric on the space of discrete measures on this space, called the Wasserstein distance \cite{villani2021topics, peyre2019computational}.

Partial optimal transport \cite{caffarelli2010free, figalli2010optimal, chapel2020partial} is an extension of the classical optimal transport problem, where the feasibility set in \eqref{eq:OT} is replaced by
\begin{equation*}
    \mathcal{T}_{s}(\boldsymbol{p}, {\boldsymbol{q}}) :=\!  \left\{{\boldsymbol{T}} \in \mathbb{R}_{+}^{n \times m}  \mid {\boldsymbol{T}} \mathbf{1} \leq \boldsymbol{p},\ {\boldsymbol{T}}^{\top} \!\mathbf{1} \leq {\boldsymbol{q}},\ \mathbf{1}^{\top} {\boldsymbol{T}} \mathbf{1} =s\right\} \!.
\end{equation*}
The parameter $s\in (0,1]$ determines the total amount of mass that must be moved.
By utilizing this feasibility set, we can consider the case where not all of the mass must be transported, and where the distributions $\boldsymbol{p}$ and $\boldsymbol{q}$ can possibly have unequal mass, 
as it may happen, e.g., in computer vision settings \cite{rubner2000earth, pele2009fast, sarlin2020superglue}.

\compress
\subsection{Fused Gromov-Wasserstein distance for graphs}
The Gromov-Wasserstein distance allows for comparing distributions within two different metric spaces, by comparing the correspondences of data points within each space \cite{memoli2011gromov, peyre2016gromov}.
This can be extended to graphs, where
the Gromov-Wasserstein problem 
analogously considers the difference between each pair of nodes within each graph to measure 
the structural difference between graphs
\cite{xu2019gromov}. \looseness =-1

More formally, a graph is a tuple $G=(V,E)$, which consists of a set of nodes $V$, which are connected by edges in the set $E$.
Let $G_s$ and $G_t$ be two graphs with $n$ and $m$ nodes, respectively.
Analogously to the optimal transport problem \eqref{eq:OT}, we seek a transport plan $\boldsymbol{T}\in\mathcal{T}(\boldsymbol{p},\boldsymbol{q})$, where $\boldsymbol{p}$ and $\boldsymbol{q}$ describe the importance of the nodes in each graph.
This transport plan provides a "soft" matching between the two graphs in the sense that it matches mass from the nodes of the source graph to the nodes of the target graph.
In particular, if $\boldsymbol{T}$ is a permutation matrix, then the transport plan describes a one-to-one matching between the nodes of the two graphs.
The structure of the graphs is represented by matrices $\boldsymbol{C}^s \in \mathbb{R}_+^{n\times n}$ and $\boldsymbol{C}^t \in \mathbb{R}_+^{m\times m}$, which can for example describe the shortest path distance between each pair of nodes, or represent the adjacency matrix.
Given two structure matrices $\boldsymbol{C}^s \in \mathbb{R}^{n\times n}$ and $\boldsymbol{C}^t \in \mathbb{R}^{m\times m}$, one can define a tensor $\boldsymbol{L}^{G_s,G_t}\in\mathbb{R}^{n \times n\times m\times m}$ with entries
\begin{equation} \label{eq:L_cost}
    \boldsymbol{L}^{G_s,G_t}_{i,i',j,j'}= \left( \boldsymbol{C}^s_{i,i'} - \boldsymbol{C}^t_{j,j'} \right)^2,
\end{equation}
that represents the 
structural discrepancy between nodes pairs $(i,i')$ in $G_s$ and $(j,j')$ in $G_t$. 
The total cost of a transportation plan $\boldsymbol{T}$ is measured as
\begin{equation*}
    \sum_{i,i',j,j'} \boldsymbol{L}^{G_s,G_t}_{i,i',j,j'} \boldsymbol{T}_{ij} \boldsymbol{T}_{i',j'} = \langle \boldsymbol{L}^{G_s,G_t} \otimes \boldsymbol{T}, \boldsymbol{T} \rangle,
\end{equation*}
where $\otimes$ is defined as 
\begin{equation} \label{eq:tensormatrix}
    \left( \boldsymbol{L} \otimes \boldsymbol{T}  \right)_{i,j} := \sum_{i',j'} \boldsymbol{L}_{i,i',j,j'} \boldsymbol{T}_{i',j'}.
\end{equation}
The Gromov-Wasserstein distance between the graphs $G_s$ and $G_t$ is defined by the optimal value of the optimization problem
 \begin{equation*} 
 \min_{\boldsymbol{T}\in\mathcal{T}(\boldsymbol{p},\boldsymbol{q})}\langle \boldsymbol{L}^{G_s,G_t} \otimes \boldsymbol{T}, \boldsymbol{T} \rangle.
\end{equation*} 

This distance between graphs has been generalized to also take into account graphs that may have node features in the fused Gromov-Wasserstein problem \cite{titouan2019optimal, vayer2020fused}. 
With node features that lie in a metric space, one can define a cost matrix $\boldsymbol{M}\in\mathbb{R}^{n \times m}$, similarly to Section~\ref{subsec:ot}, i.e., $\boldsymbol{M}_{ij}$ describes the distance between the features of node $i$ in $G_s$ and node $j$ in $G_t$.
Then the optimal transport problem between two labeled graphs $G_s$ and $G_t$ is defined as
\begin{equation} \label{eq:FGW}
    \min_{\boldsymbol{T}\in\mathcal{T}(\boldsymbol{p},\boldsymbol{q})} \ (1-\alpha)\langle\boldsymbol{T}, \boldsymbol{M}\rangle +\alpha \langle \boldsymbol{L}^{G_s,G_t} \otimes \boldsymbol{T}, \boldsymbol{T} \rangle .
\end{equation}
The optimal transport plan solving \eqref{eq:FGW} describes a soft correspondence between the nodes of two graphs, such that both their features and structure are similar.
Under suitable conditions it satisfies metric properties over a space of structured data
\cite{titouan2019optimal},
and thus provides a mathematically rigorous framework for 
measuring distances between graphs.

\compress
\section{SOT: Subgraph optimal transport} \label{sec:sot}

We now present our first solution to the subgraph matching problem, which is based on a partial fused Gromov-Wasserstein problem.
Our second solution is a refinement of this method and is described in the next section.

Let us consider a source graph $G_s$ with $n$ nodes, in which we want to find a subgraph that is similar to a given query graph $G_q$ with $m<n$ nodes. 
We propose to measure this similarity by the fused Gromov-Wasserstein distance. 
More precisely, we want to identify the subgraph of $m$ nodes in $G_s$ that is closest to the query graph $G_q$ as measured by the fused Gromov-Wasserstein distance.
Following the idea of partial optimal transport, we thus want to find a transport plan $\boldsymbol{T}$ between $m$ nodes in the source graph and all nodes of the query graph.
We give equal importance to all nodes, and thus define the mass of each node in the source and in the query graph as $1/n$. That is, $\boldsymbol{p} = \frac{1}{n} \boldsymbol{1}_n$ and $\boldsymbol{q} = \frac{1}{n} \boldsymbol{1}_m$. In particular, note that the target distribution has total mass $\boldsymbol{q}^\top \boldsymbol{1}_m = m/n$. In order to make sure that the transport plan sends mass $1/n$ to every target node we require that $\boldsymbol{T} \in \mathcal{T}_{s}(\boldsymbol{p}, {\boldsymbol{q}})$, where $s=m/n$. Note that this constraint ensures that $\boldsymbol{T}^\top \boldsymbol{1}=\boldsymbol{q}$.
Thus, we define the subgraph matching problem as finding the partial fused Gromov-Wasserstein distance between $G_s$ and $G_q$, defined as follows
\begin{equation} \label{eq:partialFGW}
    \min_{\boldsymbol{T}\in\mathcal{T}_{m/n}(\boldsymbol{p},\boldsymbol{q})} \ (1-\alpha)\langle\boldsymbol{T}, \boldsymbol{M}\rangle +\alpha \langle \boldsymbol{L}^{G_s,G_q} \otimes \boldsymbol{T}, \boldsymbol{T} \rangle .
\end{equation}

This problem can also be formulated as a standard fused Gromov-Wasserstein problem \eqref{eq:FGW} by adding a dummy node of mass $1-m/n$ to the query graph, as illustrated in Figure~\ref{fig:sot}.
\begin{figure}[tb]
\centering\includegraphics[width=1\linewidth]{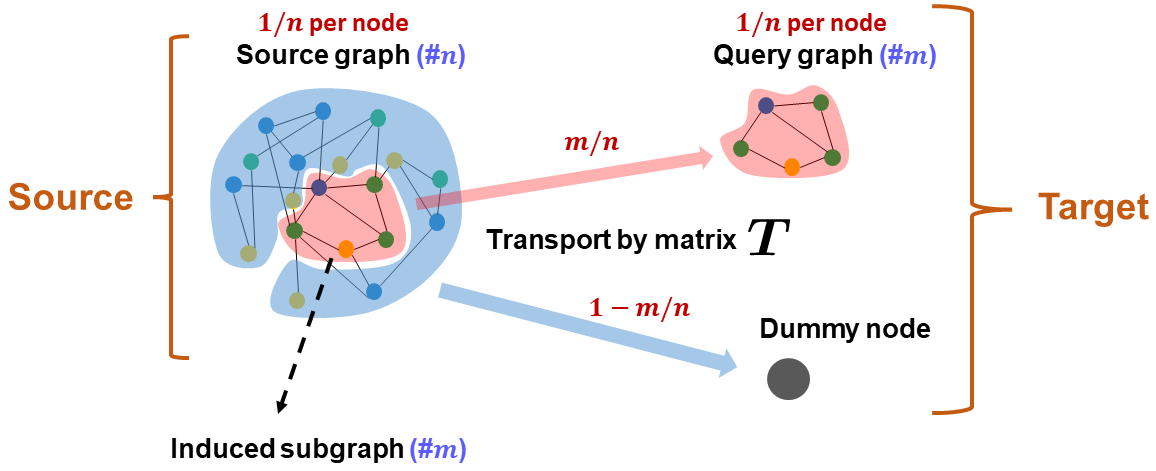}
\caption{Illustration of Subgraph Optimal Transport (SOT).
} 
\label{fig:sot}
\compressfloat
\end{figure}
More precisely, define the target graph $G_t$ as a graph with $m+1$ nodes, which consists of the query graph $G_q$ augmented by an unconnected dummy node.
Without loss of generality, we let the dummy node be the final node in our ordering. Then, we define a target probability distribution $\boldsymbol{\hat q} = [ \boldsymbol{ q}^\top, 1 - m/n]^\top \in \mathbb{R}^{m+1}$. 
Moreover, we allow for mass to be transported to the dummy node "for free", that is we define the feature cost matrix $\boldsymbol{\hat M} \in \mathbb{R}^{n \times (m+1)}$ and structure cost tensor $\boldsymbol{\hat L}^{G_s,G_t} \in\mathbb{R}^{n \times n\times (m+1)\times (m+1)}$ as $\boldsymbol{\hat M} = [\boldsymbol{M}, \boldsymbol{0}_n]$, where $\boldsymbol{M}$ is the feature cost matrix in \eqref{eq:partialFGW}, and
\begin{equation} \label{eq:Lhat}
	\boldsymbol{\hat{L}}^{G_s,G_t}_{i,i',j,j'}=
	\begin{cases}
	   \boldsymbol{L}^{G_s,G_q}_{i,i',j,j'}, & \text{if } j,j' \in \{1,...,m\} \\
		0, & \text{if } j= m+1 \text{ or }  j'= m+1,
	\end{cases}
\end{equation}
where $\boldsymbol{L}^{G_s,G_q}$ is the structure cost tensor in \eqref{eq:partialFGW}.

We note that introducing dummy nodes has previously been proposed in other graph matching problems \cite{medasani_graph_2001,gold_graduated_1996}.
Moreover, they have been used for the partial optimal transport problem and for the partial Gromov-Wasserstein problem \cite{chapel2020partial}.
However, in contrast to \cite{chapel2020partial} the transport costs to the dummy node are defined to be zero for our subgraph matching application, which leads to a more efficient computational method.
Namely, we can rely on the methods developed in previous works \cite{peyre_gromov-wasserstein_2016, titouan_optimal_2019,flamary2021pot}, which solve \eqref{eq:FGW} by means of a Frank-Wolfe algorithm.
We describe the method and our extensions to it in the Appendix~\ref{sec:alg}.

\compress
\section{SSOT: Sliding Subgraph Optimal Transport} \label{sec:ssot}

We note that the fused Gromov-Wasserstein problem \eqref{eq:FGW} that we have to solve for the SOT framework is a non-convex optimization problem with $n^2(m+1)^2$ variables, where $n$ and $m$ are the numbers of nodes in the source graph and query graph, respectively.
For large source graphs, this problem thus becomes computationally infeasible.
Moreover, the energy landscape 
may become increasingly complex, and numerical solvers might easily get stuck in local minima. 
To address these challenges we propose a refinement of the SOT framework, namely the Sliding Subgraph Optimal Transport (SSOT) method.
Note that in SOT the optimal transport plan maps all nodes in the source graph, which are not in the subgraph that is most similar to the query graph, to the dummy node at zero cost.
Thus, only $m$ nodes contribute to the optimal value in \eqref{eq:FGW} (and \eqref{eq:partialFGW}).
Therefore, we propose to iteratively compare the query graph with small subgraphs in the source graph, as illustrated in Figure~\ref{fig:ssot}.
\begin{figure}[tb]
\centering\includegraphics[width=1\linewidth]{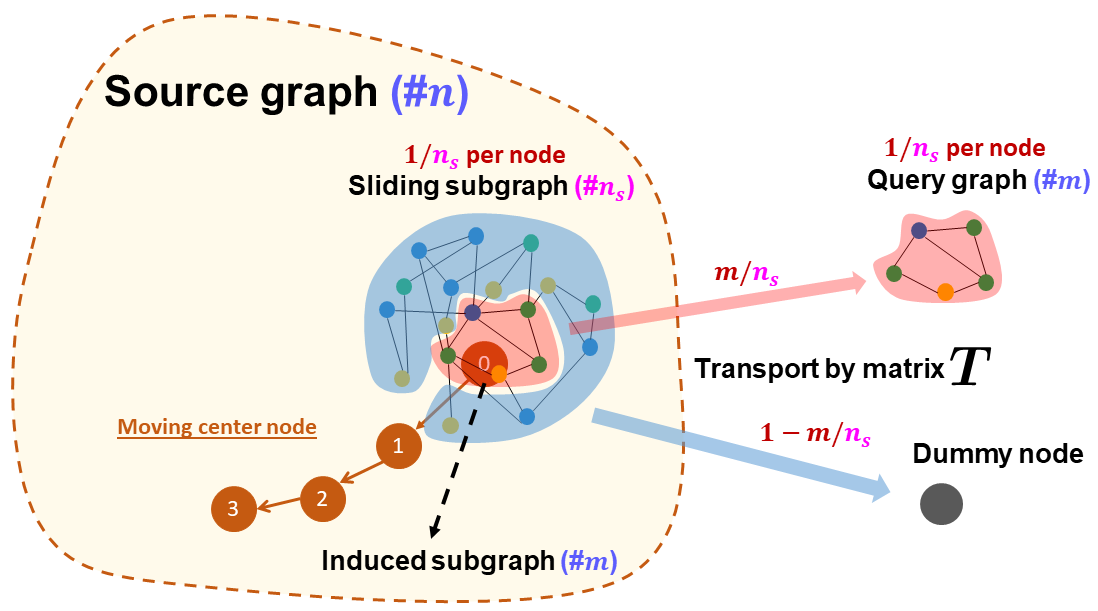}
\caption{Illustration of Sliding Subgraph Optimal Transport (SSOT). 
}
\label{fig:ssot}
\compressfloat
\end{figure}
More precisely, we iterate over the nodes in the source graph, and solve the partial fused Gromov-Wasserstein problem between a \emph{sliding subgraph}, defined by the neighborhood of this node, and the query graph.  \looseness=-1

\begin{algorithm}[tb]
	\caption{SSOT}\label{alg:ssot}
	\KwIn{$ {G}_s$, ${G}_q $}
    Compute $k$ by \eqref{r}\;
	\For{{\rm node} $v$ {\rm in} $ G_s $}{
		Define $ {G}_{v} $ as the $k$-hop neighborhood of $ v$\;
		\If{ \eqref{eq:criteria} is true }{
                Compute partial fused Gromov-Wasserstein distance \eqref{eq:partialFGW} between $G_{v}$ and $G_q$
		}		
    	}
 $G^* \leftarrow$ subgraph corresponding to the smallest computed partial fused Gromov-Wasserstein distance\;
\KwOut{$G^*$}
\end{algorithm}
\begin{figure*}
    \centering
    \begin{subfigure}[b]{0.32\textwidth}
		\centering
		 \includegraphics[width=\textwidth]{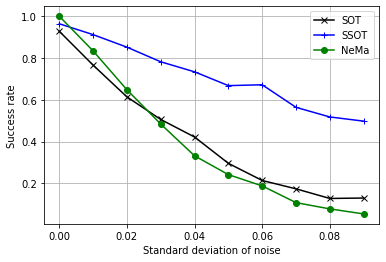}
		\caption{Success rate with feature noise.}
		\label{3}
	\end{subfigure}
    \begin{subfigure}[b]{0.32\textwidth}
		\centering
		\includegraphics[width=\textwidth]{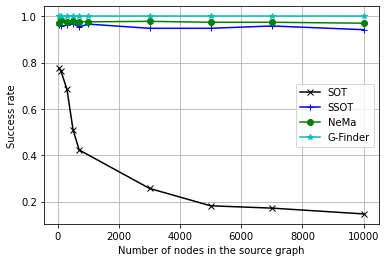}
		\caption{Success rate of exact matching.}
		\label{1}
	\end{subfigure}
	\begin{subfigure}[b]{0.32\textwidth}
		\centering
		\includegraphics[width=\textwidth]{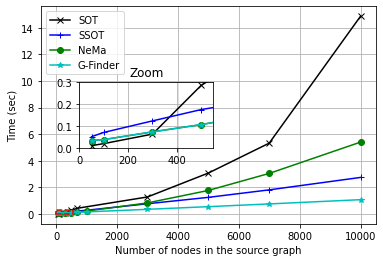}
		\caption{Average query time.}
		\label{2}
	\end{subfigure}
 \caption{Performance of our algorithms on Erdős–Rényi graphs.} 
    \label{fig:results}
    \compressfloat
\end{figure*}
The SSOT method is summarized in Algorithm~\ref{alg:ssot}.
For each node $v_s$ in the source graph, we construct a sliding subgraph $G_{v_s}$ as its $k$-hop neighborhood, where
\begin{equation}\label{r}
k = \min_{v_i \in V_q} \{ \max_{v_j \in V_q} d(v_i,v_j) \},
\end{equation}
is defined as the radius of the query graph, and where $V_q$ denotes the node set of the query graph, and $d(\cdot,\cdot)$ denotes the shortest path distance on $G_q$.
To further improve the computational efficiency of our method, we propose to prune out unsuitable
candidates among the sliding subgraphs 
before we compute the partial fused Gromov-Wasserstein distance.
A "good" candidate $G_{v_s}$ should have at least as many nodes and similar node features as the query graph $G_q$, and thus satisfy the following two criteria:
\begin{equation} \label{eq:criteria}
    \begin{aligned}
        & \text{1) } G_{v_s} \text{ has at least as many nodes as } G_q, \\
        & \text{2) } \min_{\boldsymbol{T} \in \mathcal{T}_{m/n_s}(\boldsymbol{p},\boldsymbol{q})} \langle \boldsymbol{\hat M}, \boldsymbol{T} \rangle < T^W,
    \end{aligned}
\end{equation}
where $T^W$ is a predefined positive threshold and $\boldsymbol{\hat M}$ is the feature cost matrix as defined in Section~\ref{sec:sot}.
Note that the second criterion computes the partial optimal transport only between the features of the subgraph and the query graph, which requires solving a small linear program and is very cheap compared to solving the partial fused Gromov-Wasserstein problem of \eqref{eq:partialFGW}. 

To summarize, SSOT improves the computational burden of SOT by solving partial fused Gromov-Wasserstein problems \eqref{eq:partialFGW} of smaller size.
By construction, we have that the average number of nodes in the sliding subgraph is $n_s \sim \mathcal{O}( {\text{deg}(G_s)}^k )$, where $\text{deg}(G_s)$ denotes the average node degree in the source graph.
Thus, the optimization problems involved in SSOT are on average of the order $\mathcal{O}(n_s^2 (m+1)^2)$.
SSOT requires solving at most $n$ optimization problems of this size, %
and the number of problems that need to be solved is small if many sliding subgraphs do not satisfy the filtering criteria \eqref{eq:criteria}.
These two observations show that SSOT is especially efficient for sparse graphs, and for graphs with very expressive features, where we can choose $T^W$ small enough such that many non-optimal subgraphs do not satisfy criterion 2) in \eqref{eq:criteria}. 
Moreover, we note that SSOT may find a smaller minimum than SOT in cases where SOT gets stuck in local minima.

\compress
\section{Experiments} \label{sec:exp}

We showcase the performance of our suggested methods for subgraph matching on Erdős–Rényi graphs and several real-world datasets in the setting of exact matching, as well as when there is feature noise.
We compare our results with the two state-of-the-art index-based methods NeMa \cite{khan_nema_2013} and G-Finder \cite{liu_g-finder_2019}. 
G-Finder finds a subgraph that matches the features of the query graph exactly, whereas NeMa allows for deviations in the node features.
For SOT and SSOT, the structure cost tensor $\boldsymbol{L}^{G_s,G_t}$ is defined using the graphs' adjacency matrices as structure matrices $\boldsymbol{C}^s$ and $\boldsymbol{C}^t$. 
The elements $\boldsymbol{M}_{ij}$ of the feature cost matrix and the feature similarity in NeMa are defined as follows:
For real-valued vector features $\boldsymbol{x_i}$, $\boldsymbol{x_j}$, we use the normalized square $L^2$-norm distance $1- ( 1+\left\|\boldsymbol{x_i}-\boldsymbol{x_j}\right\|_2^2 )^{-1}$, and for integer set features $A_i$, $A_j$, we use the Jaccard dissimilarity $1- |A_i \cap A_j| /|A_i \cup A_j|$.
In order to emphasize features and structure equally, we normalize both terms in \eqref{eq:partialFGW} as described in the Appendix~\ref{app:nFGW}, and set the trade-off parameter to $\alpha=0.5$ in all experiments.
The experiments are conducted on AMD Ryzen 7 PRO 4750U CPU with 16GB RAM. 

\compress
\subsection{Experiments on Erdős–Rényi graphs} \label{subsec:expsyn}

We perform subgraph matching within Erdős–Rényi graphs that are constructed as follows.
As a query graph, we first construct an Erdős–Rényi graph $G_q$ of size $m=5$ with edge probability $0.5$. The average degree within the query graph is thus $2$.
To construct a source graph with $n=100$ nodes we augment this query graph with $95$ nodes.
The nodes are connected with edge probability $3/(n-1)$
in order to ensure that the source graph has average node degree $d_s=3$. We do not add any edges between the nodes of the original query graph.
We uniformly sample features for each node in the source graph from $[0,1]$, and copy the corresponding features to the nodes in the query graph.
We then add Gaussian noise to the features of the query node with varying standard deviations. 
We solve the subgraph matching problem with our methods and NeMa.
For SSOT we set the threshold $T^W=1$, and use the same value for the feature threshold in NeMa, in order to allow for noise in the features.
For each noise level, the experiment is repeated 500 times,
and the success rates for our methods and for NeMa are shown in Figure~\ref{3}.
The success rate denotes the ratio of trials, where the query graph is identified within the source graph. 
We see that in the noise-free case NeMa performs slightly better than our methods. 
However, as the noise increases, our methods quickly outperform NeMa.
In particular, SSOT achieves significantly higher success rates than NeMa in the case of high noise levels.

Next, we study the performance of our methods with respect to the size of the source graph in a setting without feature noise.
The Erdős–Rényi query graphs and source graphs are constructed as before, and we test different sizes of source graphs.
Moreover, we now assign less expressive node features, which are picked from the set $\{ \frac{k}{20} | k=1,\dots,20 \}$.
In this noise-free setting, we can utilize the fact that node features can be matched exactly and thus set the threshold for SSOT to $T^W=10^{-9}$, and use the same value for the feature threshold in NeMa.
We run 500 trials of this experiment and compare the success rate and average query time of our tested methods.
The results for varying numbers of nodes in the source graph are summarized in Figures~\ref{1} and \ref{2}. 
One can see that the success rate of SOT decreases quickly as the graph size increases, which we assume is due to the increasingly complex energy landscape of the large fused Gromov-Wasserstein problem that needs to be solved.
Our refined SSOT method, however, is competitive with the state-of-the-art methods NeMa and G-Finder, and achieves very high success rates.
For relatively small source graphs with less than $300$ nodes, SOT runs faster than all the other methods, but for larger source graphs the size of the optimization problem leads to rapidly growing query times for SOT.
Note that the query time of SSOT grows only linearly with the number of nodes in the source graph, and only G-Finder runs faster than SSOT. However, we note that this method is based on a very efficient implementation in C++, whereas all other methods are run in Python.
Moreover, G-Finder is limited to this noise-free setting, where the exact query graph can be found in the source graph. \looseness=-1

\compress
\subsection{Experiments on real-world datasets} \label{subsec:expreal}

We now test our methods on the real-world datasets BZR \cite{KKMMN2016}, FIRSTMM\_DB \cite{KKMMN2016}, LastFM \cite{feather}, and Deezer \cite{feather}. Descriptions of these datasets can be found in Appendix~\ref{app:exp}.
For each source graph in the datasets, we randomly choose 10 subgraphs of $m=6$ nodes by the breadth-first search algorithm, which are then used as query graphs.
We test the performance of the same methods as in Section~\ref{subsec:expsyn}.
The success rates and average query times are summarized in Tables~\ref{tab:mr_exact}-\ref{tab:time_exact}.
Note that for the Deezer dataset, SOT requires too much memory to be solved on our machine.
\begin{table}[tb]
    \centering
            \caption{Success rate without noise.}
         \label{tab:mr_exact}
    \begin{tabular}{c|c|c|c|c}
     & BZR & FIRSTMM\_DB & LastFM & Deezer \\
    \hline
    SOT & 1.0 & 0.780 & 1.0 & -- \\
    SSOT & 1.0 & 0.839 & 1.0 & 1.0 \\
    NeMa & 1.0 & 0.693 & 1.0 & 0.6 \\
     G-Finder & 0.9995 & 1.0 & 1.0 & 1.0 \\
    \end{tabular}
    \vspace*{10pt}
            \caption{Average query time (in seconds) without noise.}
        \label{tab:time_exact}
     \begin{tabular}{c|c|c|c|c}
     & BZR & FIRSTMM\_DB & LastFM & Deezer \\
    \hline
    SOT & 0.005 & 4.351 & 5.659 & -- \\
    SSOT & 0.008 & 0.312 & 0.670 & 0.252 \\
    NeMa & 0.027 & 0.384 & 2.550 & 42.229 \\
     G-Finder & 0.035 & 0.388 & 3.322 & 21.669 \\
    \end{tabular}
    \compressfloat
\end{table}
All methods achieve very high success rates on BZR and LastFM.
For Deezer,
only SSOT and G-Finder have perfect success rate.
For FirstMM\_DB, G-Finder outperforms all other methods, with SSOT as second best method.
We note that all datasets have expressive node features, which explains that SOT is competitive with the other methods, in contrast to the experiments on noise-free Erdős–Rényi graphs in Section~\ref{subsec:expsyn}.
As in our previous experiments, the SOT method is the fastest for the small graph dataset BZR, but much slower than the other methods for larger graphs.
On these datasets SSOT performs impressively fast even for large graphs. In fact, for Deezer, which contains more than $28,000$ nodes, SSOT finds query graphs $100$ times faster than G-finder and $200$ times faster than NeMa. \looseness=-1

We also test our algorithms in the presence of feature noise. 
As in the previous experiments, noise is added to the features of all the nodes of the query graph. 
For real-valued features, we assign zero-mean Gaussian noise with a standard deviation $\sigma=0.5$ (for BZR), or $\sigma=0.1$ (for FIRSTMM\_DB). For integer features, the noise is assigned with $\sigma =1$ (for Deezer), or $\sigma =2$ (for LastFM), and then the feature is rounded up to an integer value.
We optimize the hyperparameter $T^W$ in SSOT and the threshold parameter in NeMa by testing values in $\{1,0.5,0.1,0.05,...,10^{-9}\}$, and report
the result that gives the highest success rate.
In case several experiments give the same success rate, we use the threshold that has the shortest average query time. 
The results are shown in Tables~\ref{tab:mr_noisy}-\ref{tab:time_noisy}, and results for other tested threshold parameters can be found in Appendix~\ref{app:exp}.
\begin{table}[tb]
         \centering
                 \caption{Success rate with noise.}
        \label{tab:mr_noisy}
    \begin{tabular}{c|c|c|c|c}
     & BZR & FIRSTMM\_DB & LastFM & Deezer \\
    \hline
    SOT & 0.264 & 0.780 & 0.9 & -- \\
    SSOT & 0.687 & 0.839 & 1.0 & 1.0 \\
    NeMa & 0.469 & 0.693 & 0.9 & -- \\
    \end{tabular}
    \vspace*{10pt}
        \caption{Average query time (in seconds) with noise.}
    \label{tab:time_noisy}
    \begin{tabular}{c|c|c|c|c}
     & BZR & FIRSTMM\_DB & LastFM & Deezer \\
    \hline
    SOT & 0.007 & 5.182 & 7.783 & -- \\
    SSOT & 0.090 & 0.322 & 99.763 & 178.976 \\
    NeMa & 0.085 & 0.389 & 358.922 & -- \\
    \end{tabular}
    \compressfloat
\end{table}
As a general trend, the presence of feature noise decreases the success rates and increases query times.
In terms of query time, SSOT achieves satisfying results, although we
note that for LastFM and Deezer the query times increase significantly compared to the noise-free setting. 
This is due to the dense graphs in these datasets,
resulting in relatively large optimization problems involved in SSOT (see Appendix~\ref{app:exp}).
In terms of success rates, SSOT outperforms all other methods, 
and is the only method that can process the largest dataset, Deezer.
As in the noise-free setting, this dataset exceeds the memory for SOT.
Moreover, for small threshold parameters NeMa fails to create node candidate sets, and for large threshold parameters NeMa suffers from excessively long query times, exceeding 1,000 seconds.

\compress
\section{Conclusion}
In this paper, we proposed a novel approach for subgraph matching based on the fused Gromov-Wasserstein distance.
Two frameworks have been presented:
First, SOT finds query graphs in a source graph by combining ideas from partial optimal transport and the fused Gromov-Wasserstein distance.
Then, SSOT extends SOT and significantly improves query times, especially in large graphs.
In our experiments, the methods demonstrate the ability to outperform state-of-the-art methods for subgraph matching.
In particular, our methods exhibit robustness with respect to noise in the datasets and achieve very fast query times.

\compress
\section*{Acknowledgment}

This work was supported by the Knut and Alice Wallenberg
foundation under grant KAW 2021.0274.
The authors would like to thank  Geert Leus, Jos H. Weber, Lihui Liu, and Daniel Staff for valuable discussions.

\bibliography{Wasserstein2,bibliography}
\bibliographystyle{ieeetr}

\appendices
\onecolumn

\begin{center}
{\Huge  Appendix to\\ \vspace{10pt}
 Subgraph Matching via Partial Optimal Transport
}
\vspace{25pt}

\large Wen-Xin Pan, Isabel Haasler, Pascal Frossard \\
Signal Processing Laboratory 4, EPFL\\
Emails:\{wenxin.pan, isabel.haasler, pascal.frossard\}@epfl.ch
\vspace{42pt}
\end{center}

\begin{addmargin}[10pt]{10pt}

\fontsize{11pt}{13pt} \selectfont



In the following we provide details on several aspects of our work.
More precisely, in Appendix~\ref{app:exp} we provide a more thorough analysis of our experiments on the real world data in Section~\ref{subsec:expreal}.
In Appendix~\ref{sec:alg} we describe the Frank-Wolfe algorithm adopted for our work in detail.
Finally, in Appendix~\ref{sec:implementation} we describe further implementation details.

\vspace{10pt}

\section{\large Numerical results for real-world datasets}\label{app:exp}

\vspace{5pt}

In this section, we provide some background and more detailed results on the real-world dataset experiments in Section~\ref{subsec:expreal}. 

Table~\ref{table:datasets} summarizes several graph statistics of the studied real-world datasets. We include two datasets of real-valued features and two datasets of integer-set features. 
\begin{table}[b]
    \centering
        \caption{\large Statistics of the real-world datasets studied in Section~\ref{subsec:expreal}. }
    \begin{tabular}{c|ccccccc}
    & & & & & & & includes \\
         dataset & contents & \#graphs & average size & \#edges & average node degree &node features & self-loops?  \\
         \hline
         BZR & chemical compounds & 405 & 35.75 & 39.36 & 2.20 & 3D real-valued vectors & No \\
         FIRSTMM\_DB & 3D point cloud data & 41 & 1,377.27 & 3,074.10 & 4.42 & real-valued scalars & Yes \\
         LastFM & social networks & 1 & 7,624 & 27,806 & 7.29 & integer sets & No \\
          Deezer & social networks & 1 & 28,281 & 92,752 & 6.56 & integer sets & No
    \end{tabular}
    \label{table:datasets}
\end{table}
The node features in BZR and LastFM are especially expressive, and thus all methods achieve extremely high success rates on these datasets.
Moreover, we note that the Deezer dataset contains the largest graph, and thus results in the longest query times in most experiments. 
However, the filtering step in SSOT is able to exploit the feature information effectively, still allowing for extremely fast query times in the noise-free setting, see Table~\ref{tab:mr_exact}.
The LastFM and Deezer datasets contain the densest graphs, which results in a large increase in computation time for SSOT in the setting with noise. This is because in the noisy setting the threshold parameter $T^W$ cannot be picked to be very small, and thus a large number of optimization problems have to be solved, which are relatively large due to the density of the graphs.

In the experiments on the real world datasets with noise, we adaptively chose the threshold parameter $T^W$ in SSOT and the node feature threshold parameter in NeMa.
In Figure~\ref{ratio-time} we show the success rate and average query time for several different choices of tested threshold parameters in $\{1,0.5,0.1,0.05,...,10^{-9}\}$.
As observed in Section~\ref{subsec:expreal}, SSOT achieves very good success rates and query times. In general, a higher success rate can be achieved by increasing the threshold parameter $T^W$, which results in longer query times.
In contrast, we observe the opposite effect of NeMa's performance on FIRSTMM\_DB. Here, the highest success rate is achieved for the threshold parameter that gives the fastest query time. This unintuitive behavior may make it difficult to tune the parameter for NeMa in practice.
Finally, we note that although SSOT generally achieves the highest success rates, in some cases SOT may achieve satisfying results at a much lower computational cost. For instance, this is the setting for the relatively dense dataset LastFM. Here, SOT achieves a success rate of 90\% within a few seconds of query time. At the same computational time, SSOT has a very small success rate, and the required query time to achieve a satisfying success rate is of an order 10 higher.

\begin{figure*}
	\centering
    \begin{subfigure}[b]{0.45\textwidth}
		\centering
		 \includegraphics[width=\textwidth]{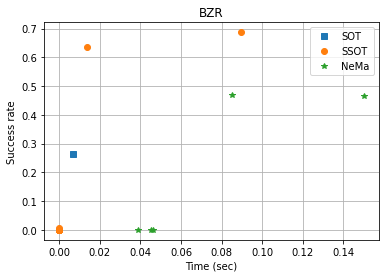}
	\end{subfigure} 
    \begin{subfigure}[b]{0.45\textwidth}
		\centering
		\includegraphics[width=\textwidth]{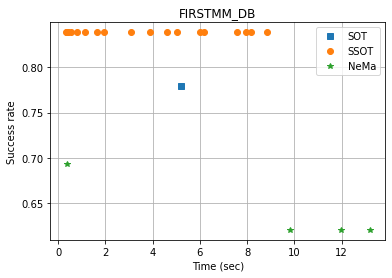}
	\end{subfigure}
	\begin{subfigure}[b]{0.45\textwidth}
		\centering
		\includegraphics[width=\textwidth]{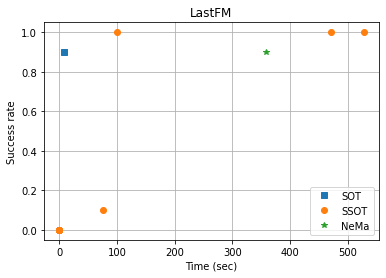}
	\end{subfigure} 
    \begin{subfigure}[b]{0.45\textwidth}
		\centering
		\includegraphics[width=\textwidth]{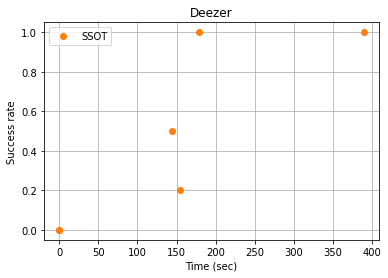}
	\end{subfigure}
    \caption{\large Success rates versus query times for different threshold
parameters.}
	\label{ratio-time}
\end{figure*}

\vspace{10pt}

\section{\large Frank-Wolfe algorithm}\label{sec:alg}

\vspace{5pt}

This section introduces the optimization algorithm for our proposed frameworks SOT and SSOT.
Following previous works on the fused Gromov-Wasserstein problem \cite{titouan2019optimal}, we utilize the Frank-Wolfe algorithm, also called conditional gradient method \cite{jaggi_revisiting_nodate}.
This is a first order optimization method, which takes gradient steps in the feasible set if the initialization is feasible.
We note that the gradient of the objective function
\begin{equation} \label{eq:J}
    \mathcal{J}(\hat{\boldsymbol{T}}) = (1-\alpha)\langle\hat{\boldsymbol{T}}, \hat{\boldsymbol{M}}\rangle+\alpha \langle \hat{\boldsymbol{L}} \otimes \hat{\boldsymbol{T}}, \hat{\boldsymbol{T}} \rangle 
\end{equation}
is given by
\begin{equation} \label{eq:Jgrad}
    \nabla \mathcal{J}(\hat{\boldsymbol{T}}) =(1-\alpha)\hat{\boldsymbol{M}}+\alpha \cdot 2 \cdot \left(\hat{\boldsymbol{L}} \otimes \hat{\boldsymbol{T}}\right),
\end{equation}
see \cite{titouan2019optimal} for details.
The Frank-Wolfe algorithm for the SOT method is summarized in Algorithm~\ref{alg:FW}.

\begin{figure}[tb]
  \centering
  \begin{minipage}{.9\linewidth}
\RestyleAlgo{ruled}
\SetKwComment{Comment}{/* }{ */}
\begin{algorithm}[H]
\normalsize
	\caption{\large Frank-Wolfe method for SOT}
        \label{alg:FW}
	\KwIn{$ {G}_s $, $ {G}_q $; Convergence tolerance $\delta$}
	\KwOut{$G^*$}
	$\boldsymbol{p} \leftarrow\frac{1}{n}\boldsymbol{1}_n$\;
  $ \hat{\boldsymbol{q}} \leftarrow\left[( \frac{1}{n} \boldsymbol{1}_m )^\top, 1-\frac{m}{n} \right]^\top$\;
Define ${\boldsymbol{L}}$ as in \eqref{eq:L_cost} \;
 $\hat{\boldsymbol{T}}^{(0)} \leftarrow \boldsymbol{p}\hat{\boldsymbol{q}}^\top$ \label{T0}\;
	
	\While{$ \Bigr\lvert \mathcal{J}\left(\boldsymbol{T}^{(k+1)}\right) -\mathcal{J}\left(\boldsymbol{T}^{(k)}\right) \Bigr\rvert \geq \delta$}
	{
        Compute $ \left(\hat{\boldsymbol{L}} \otimes \hat{\boldsymbol{T}}^{(k)} \right) $ as in Proposition \ref{tensor-matrix-isolation}  \label{tensor-matrix-Alg2}\;
        Compute $\nabla \mathcal{J}\left(\hat{\boldsymbol{{T}}}^{(k)} \right)$ as in \eqref{eq:Jgrad}  \;
	$\bar{\boldsymbol{T}}^{(k)} \in \underset{\hat{\boldsymbol{T}} \in \hat{\mathcal{T}}(\boldsymbol{p}, \boldsymbol{q})}{\arg \min}  \Bigr\langle \nabla \mathcal{J}\left(\hat{\boldsymbol{T}}^{(k)}\right), \hat{\boldsymbol{T}}^{(k)} \Bigr\rangle$\;
		$ \boldsymbol{d}^{(k)} \leftarrow  \bar{\boldsymbol{T}}^{(k)} - \hat{\boldsymbol{T}}^{(k)} $\;
		$ \gamma^{(k)} \leftarrow \underset{\gamma \in[0,1]}{\arg \min } \mathcal{J}\left(\hat{\boldsymbol{T}}^{(k)}+\gamma \boldsymbol{{d}}^{(k)}\right) $ \;
		$ \hat{\boldsymbol{T}}^{(k+1)} \leftarrow \hat{\boldsymbol{T}}^{(k)}+\gamma^{(k)} \boldsymbol{d}^{(k)} $\;
	}
	$ \hat{\boldsymbol{T}}^* \leftarrow \hat{\boldsymbol{T}}^{(k)} $ \;
    Return the matched subgraph $G^*$ obtained with $\hat{\boldsymbol{T}}^*$, and the objective value $ \mathcal{J}\left(\hat{\boldsymbol{{T}}}^{*} \right) $, computed as in \eqref{eq:J}.
\end{algorithm}
  \end{minipage}
\end{figure}
We note that by construction the initial transport plan  $\hat{\boldsymbol{T}}^{(0)} = \boldsymbol{p}\hat{\boldsymbol{q}}^\top$, as defined in line~\ref{T0} of the algorithm, lies in the set $\mathcal{T}(\boldsymbol{p},\hat{\boldsymbol{q}})$. Thus in each iteration of the Frank-Wolfe algorithm, we get a feasible transport plan $\hat{\boldsymbol{T}}^{(k)} \in \mathcal{T}(\boldsymbol{p},\hat{\boldsymbol{q}})$.
The optimal matching between the nodes in the source graph and the query graph can be reconstructed from the non-zero entries of the optimal transport matrix $\hat{\boldsymbol{T}}^*$.

The computational bottleneck of the algorithm is the computation of the tensor-matrix product in line~\ref{tensor-matrix-Alg2} of Algorithm~\ref{alg:FW}, as defined in \eqref{eq:tensormatrix}.
Note that this summation requires $\mathcal{O}\left(n^2 m^2\right)$ operations.
In some special cases, structures in the cost tensor $\hat{\boldsymbol{L}}$ can be exploited to perform the summation in \eqref{eq:tensormatrix} more efficiently,
reducing the complexity to $\mathcal{O}\left(n^2 m+m^2 n\right)$, see \cite[Proposition 1]{peyre_gromov-wasserstein_2016}.
By separating the dummy node we can adapt this computational trick to the partial fused Gromov-Wasserstein distance.

\begin{prop}[Adaptation of {\cite[Proposition 1]{peyre_gromov-wasserstein_2016}}]\label{tensor-matrix-isolation}
The tensor-matrix product \eqref{eq:tensormatrix} is of the form
\begin{equation}\label{tensor-matrix}
		\left(\hat{\boldsymbol{L}} \otimes \hat{\boldsymbol{T}}\right)_{i,j}= \begin{cases}
			\left({\boldsymbol{L}} \otimes {\boldsymbol{T}}\right)_{i,j},  \quad \text{for} \quad j=1,...,m \\
			0,  \quad \text{for}\quad j = m+1,
		\end{cases}
	\end{equation}
 with
\begin{equation*}\label{modified-Peyre}
        {\boldsymbol{L}} \otimes {\boldsymbol{T}}
		= (\boldsymbol{C}^s \odot \boldsymbol{C}^s)
  {\boldsymbol{T}}\mathbf{1}_m \mathbf{1}_{m}^{\top}
		+\mathbf{1}_{n} {\boldsymbol{q}}^{\top}  (\boldsymbol{C}^q \odot \boldsymbol{C}^q)^{\top}	-2 \boldsymbol{C}^s {\boldsymbol{T}} (\boldsymbol{C}^q)^{\top},
\end{equation*} 
where $\odot$ denotes elementwise multiplication.
\end{prop}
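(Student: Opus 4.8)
The plan is to prove the two cases in \eqref{tensor-matrix} separately, leveraging the block structure of $\hat{\boldsymbol{L}}$ given in \eqref{eq:Lhat}, and then to identify the nontrivial block with the closed form of \cite[Proposition 1]{peyre_gromov-wasserstein_2016}. First I would handle the case $j = m+1$: by definition \eqref{eq:tensormatrix}, $(\hat{\boldsymbol{L}} \otimes \hat{\boldsymbol{T}})_{i,m+1} = \sum_{i',j'} \hat{\boldsymbol{L}}_{i,i',m+1,j'} \hat{\boldsymbol{T}}_{i',j'}$, and since \eqref{eq:Lhat} sets every entry of $\hat{\boldsymbol{L}}$ with a $(m+1)$-index to zero, this sum vanishes termwise, giving $0$ as claimed. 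For $j \in \{1,\dots,m\}$, I would split the inner sum over $j'$ into $j' \le m$ and $j' = m+1$; the $j' = m+1$ term again contributes zero by \eqref{eq:Lhat}, so $(\hat{\boldsymbol{L}} \otimes \hat{\boldsymbol{T}})_{i,j} = \sum_{i',j' \le m} \boldsymbol{L}^{G_s,G_q}_{i,i',j,j'} \hat{\boldsymbol{T}}_{i',j'} = (\boldsymbol{L} \otimes \boldsymbol{T})_{i,j}$, where $\boldsymbol{T}$ denotes the restriction of $\hat{\boldsymbol{T}}$ to its first $m$ columns. This reduces the statement to the standard (non-partial) fused Gromov-Wasserstein setting.

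For the closed-form expression of $\boldsymbol{L} \otimes \boldsymbol{T}$, I would invoke \cite[Proposition 1]{peyre_gromov-wasserstein_2016}, which for a loss of the squared-difference form \eqref{eq:L_cost}, $\boldsymbol{L}_{i,i',j,j'} = (\boldsymbol{C}^s_{i,i'} - \boldsymbol{C}^q_{j,j'})^2$, decomposes the loss as $L(a,b) = f_1(a) + f_2(b) - h_1(a)h_2(b)$ with $f_1(a) = a^2$, $f_2(b) = b^2$, $h_1(a) = a$, $h_2(b) = 2b$. Expanding $(\boldsymbol{C}^s_{i,i'})^2 + (\boldsymbol{C}^q_{j,j'})^2 - 2\boldsymbol{C}^s_{i,i'}\boldsymbol{C}^q_{j,j'}$ inside the sum and distributing over $\boldsymbol{T}_{i',j'}$ gives three terms: the first collapses to $(\boldsymbol{C}^s \odot \boldsymbol{C}^s)\boldsymbol{T}\mathbf{1}_m \mathbf{1}_m^\top$ after summing out $j'$ via $\sum_{j'} \boldsymbol{T}_{i',j'} = (\boldsymbol{T}\mathbf{1}_m)_{i'}$; the second collapses to $\mathbf{1}_n \boldsymbol{q}^\top (\boldsymbol{C}^q \odot \boldsymbol{C}^q)^\top$ after summing out $i'$ via $\sum_{i'} \boldsymbol{T}_{i',j'} = \boldsymbol{q}_{j'}$ (here using the column-marginal constraint $\boldsymbol{T}^\top \mathbf{1}_n = \boldsymbol{q}$, which holds for $\hat{\boldsymbol{T}} \in \mathcal{T}_{m/n}(\boldsymbol{p},\boldsymbol{q})$ as noted in Section~\ref{sec:sot}); and the third is exactly $-2\,\boldsymbol{C}^s \boldsymbol{T} (\boldsymbol{C}^q)^\top$ by the definition of matrix multiplication. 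Summing the three yields the stated formula.

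The main subtlety — rather than a genuine obstacle — is bookkeeping the marginal conventions: the second term in the closed form relies on the column marginal of $\boldsymbol{T}$ being exactly $\boldsymbol{q}$, and one must check that this remains valid after deleting the dummy column, which it does precisely because all dummy mass is carried in column $m+1$ and the constraint $\hat{\boldsymbol{T}}^\top \mathbf{1} = \hat{\boldsymbol{q}}$ restricts, on the first $m$ coordinates, to $\boldsymbol{T}^\top \mathbf{1} = \boldsymbol{q}$. A second point worth stating explicitly is that the zero-cost convention \eqref{eq:Lhat} for the dummy node is what makes the reduction to the $m$-column block exact, rather than merely approximate; this is the structural fact that distinguishes our construction from the dummy-node approach of \cite{chapel2020partial} and is what permits reusing the $\mathcal{O}(n^2 m + m^2 n)$ computational trick verbatim on the smaller block.
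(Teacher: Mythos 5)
Your proposal is correct and follows essentially the same route as the paper's proof: zero out the dummy column termwise via \eqref{eq:Lhat}, split the inner sum over $j'$ to reduce the remaining block to $\boldsymbol{L}\otimes\boldsymbol{T}$, and then invoke \cite[Proposition 1]{peyre_gromov-wasserstein_2016} on the expansion $(\boldsymbol{C}^s_{i,i'}-\boldsymbol{C}^q_{j,j'})^2 = (\boldsymbol{C}^s_{i,i'})^2+(\boldsymbol{C}^q_{j,j'})^2-2\boldsymbol{C}^s_{i,i'}\boldsymbol{C}^q_{j,j'}$. The extra bookkeeping you supply on the column-marginal constraint is a fair elaboration of what the paper delegates to the cited result, but it does not change the argument.
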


\begin{proof}
     Recall that $ \hat{\boldsymbol{L}}_{i,i',j,j'} $ defined in \eqref{eq:Lhat} is zero if $ j = m+1  $ or $ j'=m+1 $. Thus, all the entries in the last column of $ \hat{\boldsymbol{L}} \otimes \hat{\boldsymbol{T}} \in \mathbb{R}^{n\times (m+1)} $ will also be zeros. 
    More precisely, for $j=m+1$, the corresponding elements in $\hat{\boldsymbol{L}} \otimes \hat{\boldsymbol{T}}$ are given by
    \begin{equation*}
        \left(\hat{\boldsymbol{L}} \otimes \hat{\boldsymbol{T}}\right)_{i,m+1 }
    	=\sum_{i',j'} \hat{\boldsymbol{L}}_{i,i', m+1,j'} \hat{\boldsymbol{T}}_{i', j'}
    	=0.
    \end{equation*}
    Moreover, for $j\le m$ we get 
    \begin{align*}
    	\begin{aligned}
    		\left(\hat{\boldsymbol{L}} \otimes \hat{\boldsymbol{T}}\right)_{i,j}
    		&= \sum_{i'=1}^n \sum_{j'= 1}^{m+1} \hat{\boldsymbol{L}}_{i,i',j,j'} \hat{\boldsymbol{T}}_{i', j'}
       = \sum_{i'=1}^n \left( \sum_{j'= 1}^{m}  \hat{\boldsymbol{L}}_{i,i',j,j'} \hat{\boldsymbol{T}}_{i', j'} + \hat{\boldsymbol{L}}_{i,i',j,m+1} \hat{\boldsymbol{T}}_{i', m+1} \right) 
       	= \sum_{i'=1}^n \sum_{j'= 1}^{m} \boldsymbol{L}_{i,i',j,j'} \boldsymbol{T}_{i', j'}\\
    	&	= \left({\boldsymbol{L}} \otimes {\boldsymbol{T}}\right)_{i,j}. \\
    	\end{aligned}
    \end{align*}
    Finally, since $\boldsymbol{L}$, as defined in \eqref{eq:L_cost}, can be written as
    \begin{equation*}
    \begin{aligned}
    \boldsymbol{L}^{G_s,G_t}_{i,i',j,j'}  = \left( \boldsymbol{C}^s_{i,i'} - \boldsymbol{C}^t_{j,j'} \right)^2  
    = (\boldsymbol{C}^s_{i,i'})^2 + (\boldsymbol{C}^t_{j,j'})^2 - 2\boldsymbol{C}^s_{i,i'}\boldsymbol{C}^t_{j,j'},
    \end{aligned}
    \end{equation*}
    we can apply \cite[Proposition 1]{peyre_gromov-wasserstein_2016} and get the expression for the tensor-matrix product $ {\boldsymbol{L}} \otimes {\boldsymbol{T}}$ in the proposition.
\end{proof}

\vspace{10pt}

 \section{\large Implementation details} \label{sec:implementation}

\vspace{5pt}

 In this section, we provide details regarding our numerical implementation.
More precisely, we describe the normalized fused Gromov-Wasserstein problem, which allows us to control the importance of features versus structure in a practical way.
Finally, we provide details on the choice of parameters used in the experiments.

 \subsection{\large Normalized fused Gromov-Wasserstein problem}
 \label{app:nFGW}

Note that the partial fused Gromov-Wasserstein distance is zero if an exact match is found. However, in the presence of noise, it is crucial to tune the importance of features and structure in a controlled way. 
In order to emphasize features and structure equally when setting the trade-off parameter to $\alpha =0.5$, we normalize both terms of the objective in \eqref{eq:partialFGW}.
Recall that we use the structure cost defined in \eqref{eq:L_cost}, where the structure matrices $\boldsymbol{C}^s$ and $\boldsymbol{C}^t$ are defined as the graphs' adjacency matrices. The elements in the tensor $\boldsymbol{L}$ thus lie in the set $\{0,1\}$.
Moreover, for the feature costs defined in $\boldsymbol{M}$ we adopt normalized distances as discussed in Section~\ref{sec:exp}.
The elements of this matrix thus lie within $[0,1]$. 
We thus define the normalized fused Gromov-Wasserstein problem for subgraph matching, as the partial fused Gromov-Wasserstein problem
\begin{equation} \label{eq:npfgw}
    \underset{{\boldsymbol{T}}\in \mathcal{T}_{m/n}(\boldsymbol{p},{\boldsymbol{q}})}{\min } \ (1-\alpha) \frac{n}{m} \langle {\boldsymbol{T}}, {\boldsymbol{M}}\rangle  + \alpha \frac{n^2}{{m^2}} \langle {\boldsymbol{L}} \otimes {\boldsymbol{T}}, {\boldsymbol{T}} \rangle .
\end{equation}
Following the same strategy as in Section~\ref{sec:sot} we can augment the target distribution by a dummy node and formulate problem \eqref{eq:npfgw} as a standard fused Gromov-Wasserstein problem \eqref{eq:FGW}, where the first term in \eqref{eq:FGW} is multiplied by a factor $n/m$ and the second term is multiplied by a factor $n^2/m^2$.

\begin{prop}\label{prop-nFGW}
For every feasible transport plan ${\boldsymbol{T}}\in \mathcal{T}_{m/n}(\boldsymbol{p},{\boldsymbol{q}})$, each of the two terms in the normalized fused Gromov-Wasserstein problem for subgraph matching in \eqref{eq:npfgw}
	takes a value in $ [0,1] $. 
\end{prop}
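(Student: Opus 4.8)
The plan is to bound each of the two terms in \eqref{eq:npfgw} separately, using the feasibility constraints on $\boldsymbol{T}$ together with the fact that the cost entries lie in $[0,1]$. Throughout I use that any $\boldsymbol{T} \in \mathcal{T}_{m/n}(\boldsymbol{p},\boldsymbol{q})$ satisfies $\boldsymbol{T} \geq 0$, $\boldsymbol{T}\boldsymbol{1}\leq \boldsymbol{p} = \frac{1}{n}\boldsymbol{1}_n$, $\boldsymbol{T}^\top\boldsymbol{1}\leq \boldsymbol{q} = \frac{1}{n}\boldsymbol{1}_m$, and the total mass equals $\boldsymbol{1}^\top\boldsymbol{T}\boldsymbol{1} = m/n$.

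First I would handle the feature term $\frac{n}{m}\langle \boldsymbol{T},\boldsymbol{M}\rangle$. Since $0 \le \boldsymbol{M}_{ij}\le 1$ and $\boldsymbol{T}_{ij}\ge 0$, we get $0 \le \langle \boldsymbol{T},\boldsymbol{M}\rangle = \sum_{ij}\boldsymbol{M}_{ij}\boldsymbol{T}_{ij} \le \sum_{ij}\boldsymbol{T}_{ij} = \boldsymbol{1}^\top\boldsymbol{T}\boldsymbol{1} = m/n$. Multiplying by $n/m$ gives a value in $[0,1]$, as claimed.

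Next I would handle the structure term $\frac{n^2}{m^2}\langle \boldsymbol{L}\otimes\boldsymbol{T},\boldsymbol{T}\rangle$. By \eqref{eq:tensormatrix}, $\langle \boldsymbol{L}\otimes\boldsymbol{T},\boldsymbol{T}\rangle = \sum_{i,i',j,j'}\boldsymbol{L}_{i,i',j,j'}\boldsymbol{T}_{ij}\boldsymbol{T}_{i'j'}$. Since the structure matrices are adjacency matrices, $\boldsymbol{L}_{i,i',j,j'} = (\boldsymbol{C}^s_{i,i'}-\boldsymbol{C}^t_{j,j'})^2 \in\{0,1\}$, so each summand is non-negative and bounded above by $\boldsymbol{T}_{ij}\boldsymbol{T}_{i'j'}$. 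Hence $0 \le \langle\boldsymbol{L}\otimes\boldsymbol{T},\boldsymbol{T}\rangle \le \big(\sum_{ij}\boldsymbol{T}_{ij}\big)\big(\sum_{i'j'}\boldsymbol{T}_{i'j'}\big) = (\boldsymbol{1}^\top\boldsymbol{T}\boldsymbol{1})^2 = (m/n)^2$. Multiplying by $n^2/m^2$ yields a value in $[0,1]$, completing the proof.

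I do not anticipate a serious obstacle; the argument is an elementary interchange of summation order together with the entrywise bounds on $\boldsymbol{M}$ and $\boldsymbol{L}$. The only point requiring a little care is to make sure the mass bound $\boldsymbol{1}^\top\boldsymbol{T}\boldsymbol{1} = m/n$ is used (rather than the tighter-looking marginal inequalities), since the crude bound $\boldsymbol{L}_{i,i',j,j'}\le 1$ and $\boldsymbol{M}_{ij}\le 1$ already suffices once the total mass is fixed; one should also note that the normalization constants $n/m$ and $n^2/m^2$ are exactly what is needed to turn these mass-dependent bounds into the clean interval $[0,1]$.
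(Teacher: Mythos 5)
Your proposal is correct and follows essentially the same route as the paper's proof: bound each term by replacing $\boldsymbol{M}_{ij}$ and $\boldsymbol{L}_{i,i',j,j'}$ with their upper bound $1$, then use the total mass $\boldsymbol{1}^\top\boldsymbol{T}\boldsymbol{1}=m/n$ to obtain $m/n$ and $m^2/n^2$ respectively, which the normalization factors cancel exactly. The only cosmetic difference is that the paper reaches $\sum_{ij}\boldsymbol{T}_{ij}=m/n$ via the marginal $\boldsymbol{q}$ rather than directly from the mass constraint.
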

\begin{proof}[Proof of Proposition \ref{prop-nFGW}]
First note that the two terms in \eqref{eq:npfgw} are always non-negative. Thus, we only need to show the upper bound.
Note that since we use normalized feature costs, it holds that ${\boldsymbol{M}}_{i,j}\le 1$, and thus
	\begin{equation*}\label{key}
		\sum_{\substack{{i=1,...,n} \\ {j=1,...,m}}} {\boldsymbol{M}}_{i,j}{\boldsymbol{T}}_{i,j}
		\le \sum_{\substack{{i=1,...,n} \\ {j=1,...,m}}}
		{\boldsymbol{T}}_{i,j}= \sum_{j=1,...,m}{\boldsymbol{q}}_j\\
  = \frac{m}{n}.
	\end{equation*}
Similarly, since the elements in $\boldsymbol{L}$ lie in the set $\{0,1\}$, we get that
\begin{equation*}
   \sum_{\substack{{i,i'=1,...,n} \\ {j,j'=1,...,m}}}{\boldsymbol{L}}_{i,i',j,j'} {\boldsymbol{T}}_{i,j} {\boldsymbol{T}}_{i',j'}
		\le \sum_{\substack{{i,i'=1,...,n} \\ {j,j'=1,...,m}}}{\boldsymbol{T}}_{i,j} {\boldsymbol{T}}_{i',j'} 
  =\sum_{\substack{{i=1,...,n} \\ {j=1,...,m}}} \left({\boldsymbol{T}}_{i,j}
		\sum_{\substack{{i'=1,...,n} \\ {j'=1,...,m}}} {\boldsymbol{T}}_{i',j'} \right)
		= \frac{m}{n}\sum_{\substack{{i=1,...,n} \\ {j=1,...,m}}}{\boldsymbol{T}}_{i,j}
		= \frac{m^2}{n^2}.
\end{equation*}
Thus, the result follows.
\end{proof}

\subsection{\large Further implementation details}

In the experiments, we set the convergence tolerance in Algorithm~\ref{alg:FW} to
$ \delta = 10^{-9}$. 
We note that this is a very conservative choice for the parameter.
We may achieve faster query times for our proposed methods, while still maintaining good performances, by increasing this tolerance.

The implementations of NeMa and G-Finder were slightly modified to allow for all cases tested in our experiments. The NeMa implementation named \textsf{fornax} does not support the case where the source graph contains self-loops. In this case, NeMa's success rate is marked as zero. G-Finder does not support the case where the query graph is a line graph. We manually add an additional function to support this case.

\end{addmargin}

\end{document}